\documentclass[11pt]{llncs}
\usepackage{microtype}
\usepackage{makeidx}
\usepackage{graphicx}
\usepackage{xcolor}
\usepackage{ifthen}
\usepackage{microtype}
\usepackage{boxedminipage}
\usepackage{amsmath}
\usepackage{amsfonts}
\usepackage{amssymb}
\usepackage{authblk}
\usepackage{breakcites}
\usepackage{chemscheme}
\usepackage{framed}
\usepackage{float}
\usepackage{breqn}
\usepackage{array, makecell}

\setlength{\topmargin}{-0.5in}
\setlength{\textwidth}{6.5in}
\setlength{\oddsidemargin}{0.0in}
\setlength{\evensidemargin}{0.0in}
\setlength{\textheight}{9.0in}
\newcommand{\tinym}[1]{{\tiny{\mbox{#1}}}}

\newcommand{\E}{{\bf E}}

\newcommand{\etal}{{\it et al. }}

\newcommand{\veps}{\varepsilon}

\newcommand{\FN}[1]{||#1||_F^2}
\newcommand{\AV}[1]{\mathbf{A}^{(#1)}}
\newcommand{\XV}[1]{\mathbf{X}^{(#1)}}
\newcommand{\rv}[1]{\mathbf{r}^{(#1)}}
\newcommand{\sv}[1]{\mathbf{s}^{(#1)}}
\newcommand{\tv}[1]{\mathbf{t}^{(#1)}}
\newcommand{\zv}[1]{\mathbf{z}^{(#1)}}
\newcommand{\norm}[1]{\lVert #1 \rVert}
\newcommand{\half}{\frac{1}{2}}

\newcommand{\TA}{\tilde{A}}
\DeclareMathOperator*{\argmin}{argmin}

\newtheorem{fact}{Fact}

\allowdisplaybreaks

\graphicspath{{./Figures/}}

\begin{document}

\title{Multiplicative Rank-$1$ Approximation using Length-Squared Sampling}
%
%
\author{Ragesh Jaiswal\thanks{Part of the work was done while the author was on sabbatical from IIT Delhi and visiting UC San Diego.} 
\and Amit Kumar}
%
%
%
\institute{
Department of Computer Science and Engineering, \\
Indian Institute of Technology Delhi.\thanks{Email addresses: \email{\{rjaiswal, amitk\}@cse.iitd.ac.in}}
}
{\def\addcontentsline#1#2#3{}\maketitle}

\begin{abstract}
We show that the span of $\Omega(\frac{1}{\veps^4})$ rows of any matrix $A \subset \mathbb{R}^{n \times d}$ sampled according to the length-squared distribution contains a rank-$1$ matrix $\tilde{A}$ such that $\FN{A - \tilde{A}} \leq (1 + \veps) \cdot \FN{A - \pi_1(A)}$, where $\pi_1(A)$ denotes the best rank-$1$ approximation of $A$ under the Frobenius norm. Length-squared sampling has previously been used in the context of rank-$k$ approximation. However, the approximation obtained was additive in nature. We obtain a multiplicative approximation albeit only for rank-$1$ approximation.
\end{abstract}




\pagestyle{plain}
\setcounter{page}{1}

\section{Introduction}
Rank-$k$ approximation is an important problem in data analysis.
Given a dataset represented as a matrix $A \subset \mathbb{R}^{n \times d}$, where the rows of the matrix represent the data points, the goal is to find a rank-$k$ matrix $\tilde{A}$  such that $\FN{A - \tilde{A}}$ is is not too large compared to $\FN{A - \pi_k(A)}$. 
Here $\pi_k(A)$ denotes the best rank-$k$ matrix under the Frobenius norm, that is, $$\pi_k(A) = \argmin_{{X: \mbox{\small{rank}}(X) \leq k}} \FN{A - X}.$$
Note that this problem is not computationally hard and can be solved using Singular Value Decomposition (SVD). Here, we discuss a simpler sampling based algorithm.
From a geometric perspective, the problem is to find a {\em best-fit} $k$-subspace to a given $n$ points in $d$-dimensional Euclidean space where the measure of fit is the sum of squared distance of the points to the subspace.
In this article we restrict our discussion to rank-$1$ approximation which corresponds to the geometric {\em best-fit line} problem. 
We will be using the matrix and the geometric interpretations interchangeably in this discussion.

We will discuss a sampling technique for this problem. 
We start with the following question: ``{\it Is there a simple sampling procedure that samples a few rows of the given matrix such that the span of the sampled rows contain a good rank-1 approximation?}"
Let us try the simplest option of sampling from the uniform distribution.
One quickly realises that it is easy to construct datasets where the span of even a fairly large sample of uniformly sampled rows does not contain a good rank-$1$ matrix.
For example, consider a two-dimensional dataset where all points  except one  have coordinate $(0, y)$ and the remaining point has coordinate $(x, 0)$ and $x \gg y$.
This example suggests that one should take the norm of a point into consideration while sampling.
This naturally leads to {\em length-squared sampling}. 
The idea is to sample rows such that the sampling probability of the $i^{th}$ row is proportional to the square of its norm.
That is, the sampling probability $p_i$ of the row $\AV{i}$ of a matrix $A$ is given by:
\[
p_i = \frac{\norm{\AV{i}}^2}{\FN{A}}
\]
Length-squared sampling has been explored in the past work of Frieze \etal~\cite{fkv04} and further explored in various works~\cite{drvw06,dv06}.
The main result known from previous works in the context of rank-$1$ approximation is the following theorem of Frieze \etal~\cite{fkv04}.

\begin{theorem}[\cite{fkv04}]\label{thm:frieze}
Let $0 < \veps < 1$. Let $S$ be a sample of $s$ rows of an $n \times d$ matrix $A$, each chosen independently with length-squared distribution. If $s = \Omega(\frac{1}{\veps})$, then the span of $S$ contains a matrix $\TA$ of rank-$1$ such that:
\[
\E[\FN{A - \TA}] \leq \FN{A - \pi_1(A)} \ +\  \veps \cdot \FN{A}.
\]
\end{theorem}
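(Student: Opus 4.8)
The plan is to reduce Theorem~\ref{thm:frieze} to a purely Euclidean statement about $\mathrm{span}(S)\subseteq\R^d$ and then to exhibit, inside that span, one explicit direction that is near-optimal in expectation. Assume $A\neq 0$, fix a unit top right singular vector $v_1$ of $A$, and let $\sigma_1=\norm{Av_1}$ be its largest singular value, so that $\FN{A-\pi_1(A)}=\FN{A}-\sigma_1^2$. For any unit vector $y$, the best rank-$1$ matrix whose row space lies in $\mathrm{span}(y)$ is $Ayy^{T}$, with $\FN{A-Ayy^{T}}=\FN{A}-\norm{Ay}^2$; consequently the best rank-$1$ matrix $\TA$ whose row space lies in $\mathrm{span}(S)$ satisfies $\FN{A-\TA}=\FN{A}-\max_{y\in\mathrm{span}(S),\,\norm{y}=1}\norm{Ay}^2$. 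Hence it suffices to prove
\[
\E[\ \max_{y\in\mathrm{span}(S),\,\norm{y}=1}\norm{Ay}^2\ ]\ \geq\ \sigma_1^2-\veps\cdot\FN{A}.
\]

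To control this maximum I produce a vector of $\mathrm{span}(S)$ that points, on average, exactly along $v_1$. Since $v_1$ is a top eigenvector of $A^{T}A$, we have $A^{T}Av_1=\sigma_1^2 v_1$, that is $v_1=\frac{1}{\sigma_1^2}\sum_i\dotp{\AV{i}}{v_1}\AV{i}$. Writing $i_1,\dots,i_s$ for the sampled indices and recalling $p_i=\norm{\AV{i}}^2/\FN{A}$, define the importance-sampling estimator
\[
\hat v\ =\ \frac{1}{\sigma_1^2\,s}\sum_{j=1}^{s}\frac{\dotp{\AV{i_j}}{v_1}}{p_{i_j}}\,\AV{i_j}.
\]
This is a real linear combination of the sampled rows, so $\hat v\in\mathrm{span}(S)$, and $\E[\hat v]=v_1$. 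A routine second-moment computation, using $p_i=\norm{\AV{i}}^2/\FN{A}$, $\sum_i\norm{\AV{i}}^2=\FN{A}$, and $\sum_i\dotp{\AV{i}}{v_1}^2=\norm{Av_1}^2=\sigma_1^2$, yields $\E[\norm{\hat v-v_1}^2]=\frac{1}{s}(\FN{A}/\sigma_1^2-1)$.

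To finish, let $\Pi$ be the orthogonal projection onto $\mathrm{span}(S)$ and let $\eta=\norm{v_1-\Pi v_1}^2$ be the squared distance from $v_1$ to $\mathrm{span}(S)$. Since $\hat v\in\mathrm{span}(S)$, $\eta\leq\norm{\hat v-v_1}^2$, so $\E[\eta]\leq\frac{1}{s}(\FN{A}/\sigma_1^2-1)$. Now take $y=\Pi v_1$ as the candidate direction: then $\norm{y}^2=1-\eta$ and $\dotp{v_1}{y}=\norm{\Pi v_1}^2=1-\eta$, and the singular value decomposition of $A$ gives $\norm{Ay}^2\geq\sigma_1^2\dotp{v_1}{y}^2=\sigma_1^2(1-\eta)^2$, hence $\norm{Ay}^2/\norm{y}^2\geq\sigma_1^2(1-\eta)$. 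Taking expectations and using $\sigma_1^2\leq\FN{A}$,
\[
\E[\ \max_{y\in\mathrm{span}(S),\,\norm{y}=1}\norm{Ay}^2\ ]\ \geq\ \sigma_1^2(1-\E[\eta])\ \geq\ \sigma_1^2-\frac{\FN{A}-\sigma_1^2}{s}\ \geq\ \sigma_1^2-\frac{\FN{A}}{s},
\]
so any $s\geq 1/\veps$ (in particular $s=\Omega(1/\veps)$) makes the right-hand side at least $\sigma_1^2-\veps\cdot\FN{A}$, which is what we needed.

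The only genuinely routine piece is the second-moment calculation; the parts requiring care are the reduction in the first paragraph and the observation that $\hat v$ and $\Pi v_1$ — both written in terms of the a priori unknown vector $v_1$ — are nevertheless honest elements of $\mathrm{span}(S)$, which is all that is needed because the theorem asserts only the existence of $\TA$, not a procedure to find it. Degenerate situations ($A=0$, a repeated largest singular value, or $\Pi v_1=0$) cause no trouble: the bound is then either vacuous or the same identities hold for whichever unit top singular vector is chosen as $v_1$. It is worth noting that the linear $\Theta(1/\veps)$ dependence comes out of this tailored estimator; the cruder route of bounding $\norm{S^{T}S-A^{T}A}$ for the scaled sample matrix $S$ in Frobenius or operator norm would only deliver a $\Theta(1/\veps^2)$ bound.
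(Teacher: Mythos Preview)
The paper does not prove Theorem~\ref{thm:frieze}; it is quoted from Frieze~\etal~\cite{fkv04} and used only as a black box (in Lemma~\ref{lemma:fkv-lemma}), so there is no in-paper proof to compare against.

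That said, your argument is correct and is essentially the Frieze--Kannan--Vempala argument specialised to rank~$1$: form the importance-sampling estimator $\hat v$ of $v_1$ from the sampled rows, control $\E\norm{\hat v-v_1}^2$ using the length-squared weights, and then pass to the orthogonal projection $\Pi v_1\in\mathrm{span}(S)$ as the candidate direction. The reduction $\FN{A-\TA}=\FN{A}-\max_{y}\norm{Ay}^2$, the identity $\E[\hat v]=v_1$, and the variance computation $\E\norm{\hat v-v_1}^2=\tfrac{1}{s}(\FN{A}/\sigma_1^2-1)$ are all fine, and the chain $\norm{Ay}^2/\norm{y}^2\geq\sigma_1^2(1-\eta)$ via the SVD is clean. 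One small cosmetic point: when $\Pi v_1=0$ you cannot literally normalise $y$, but then $\eta=1$ and the desired inequality $\max_{\norm{y}=1}\norm{Ay}^2\geq\sigma_1^2(1-\eta)=0$ is trivial, so the expectation step still goes through as you note. You in fact obtain the slightly sharper bound $\E[\FN{A-\TA}]\leq \FN{A-\pi_1(A)}+\tfrac{1}{s}\bigl(\FN{A}-\sigma_1^2\bigr)$ before relaxing it to the stated additive $\veps\cdot\FN{A}$.
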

Note that this only gives an additive approximation and the additive error of $\veps \cdot \FN{A}$ can be very large since we do not have any control on $\FN{A}$.
This raises the question about whether a multiplicative approximation could be possible.
The subsequent works of Deshpande and Vempala~\cite{dv06} and Deshpande \etal~\cite{drvw06} use {\em adaptive length-squared sampling} along with {\em volume sampling} to obtain a multiplicative approximation.
In this work, we show that length-squared sampling is sufficient to obtain a multiplicative approximation albeit at the cost of using a slightly larger sample size.
Our main result is formally stated as the following theorem.

\begin{theorem}[Main result]\label{thm:main}
Let $0 < \veps < 1$. Let $S$ be a sample of $s$ rows of an $n \times d$ matrix $A$, each chosen independently with length-squared distribution. If $s = \Omega(\frac{1}{\veps^4})$, then the span of $S$ contains a matrix $\TA$ of rank-$1$ such that:
\[
\E[\FN{A - \TA}] \leq (1+\veps) \cdot \FN{A - \pi_1(A)}.
\]
\end{theorem}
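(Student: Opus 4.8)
The plan is to reduce the statement to producing, for a typical length‑squared sample $S$, a single unit vector $v$ in $\mathrm{span}(S)$ along which $A$ acts almost as strongly as along its principal direction, and then to build such a $v$ explicitly, treating separately the regime in which $A$ is far from rank one and the regime in which it is close to rank one. For the set‑up: for any unit $v \in \R^d$ the matrix $Avv^T$ has rank at most one, its rows lie in $\mathrm{span}\{v\}$, and $\FN{A - Avv^T} = \FN{A} - \norm{Av}^2$; and one checks that the best rank‑$1$ matrix with rows in $\mathrm{span}(S)$ does have this form, so it suffices to exhibit one good $v$. Writing $\sigma_1$ for the top singular value, $v^*$ for the corresponding right singular vector, and $\Delta := \FN{A - \pi_1(A)} = \FN{A} - \sigma_1^2$, the goal is thus $\E\norm{Av}^2 \ge \sigma_1^2 - \veps\,\Delta$ for some unit $v \in \mathrm{span}(S)$. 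I will use the standard unbiased estimator: rescaling the $j$‑th sampled row $A^{(i_j)}$ by $(s\,p_{i_j})^{-1/2}$ gives a matrix $\hat A$ whose row space is $\mathrm{span}(S)$, with $\E[\hat A^T\hat A] = A^TA$, and — since each rescaled row has squared length exactly $\FN{A}/s$ — with $\E\,\norm{\hat A^T\hat A - A^TA}_F \le \FN{A}/\sqrt s$.

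\textbf{Far from rank one: $\Delta \ge \veps\,\FN{A}$.} Here take $v = \hat v$, the top right singular vector of $\hat A$. Since $\hat v$ maximizes $u \mapsto u^T\hat A^T\hat A\,u$, a short perturbation argument gives $\norm{A\hat v}^2 \ge \sigma_1^2 - 2\norm{\hat A^T\hat A - A^TA}_F$, so $\E\norm{A\hat v}^2 \ge \sigma_1^2 - 2\FN{A}/\sqrt s \ge \sigma_1^2 - (2/\sqrt s)(\Delta/\veps)$, which is $\ge \sigma_1^2 - \veps\,\Delta$ once $s = \Omega(1/\veps^4)$. It is precisely the crude $\FN{A}/\sqrt s$ additive error, combined with the threshold $\veps\,\FN{A}$ (needed to make the other regime go through), that forces the quartic sample size.

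\textbf{Close to rank one: $\Delta < \veps\,\FN{A}$.} Work in the right‑singular basis, so $v^* = e_1$, $A^TA$ is diagonal, and the $j$‑th sampled row splits as $A^{(i_j)} = (a_{i_j},\, z^{(i_j)})$ with $a_{i_j}$ its $v^*$‑coordinate and $z^{(i_j)} \perp v^*$. Set $M = \hat A^T\hat A$, let $m_{11} = \sum_j a_{i_j}^2/(s\,p_{i_j})$ (an unbiased estimator of $\sigma_1^2$), and let $m = \sum_j (a_{i_j}/(s\,p_{i_j}))\,z^{(i_j)} \in (v^*)^\perp$ be the off‑diagonal block of $M$ coupling $v^*$ to $(v^*)^\perp$. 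The crucial estimate is that $\E[m] = 0$ (because $v^*$ is orthogonal to the column spaces of the remaining singular directions) and, using $p_i \ge \norm{z^{(i)}}^2/\FN{A}$,
\[
\E\norm{m}^2 \ \le\ \frac{\FN{A}}{s}\sum_i \frac{a_i^2\,\norm{z^{(i)}}^2}{a_i^2 + \norm{z^{(i)}}^2} \ \le\ \frac{\FN{A}}{s}\sum_i\norm{z^{(i)}}^2 \ =\ \frac{\FN{A}\,\Delta}{s},
\]
a full factor $\Delta/\FN{A}$ smaller than the generic second‑moment bound $(\FN{A})^2/s$ for $\hat A^T\hat A - A^TA$. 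Now use the explicit witness $w = Mv^*/m_{11} \in \mathrm{span}(S)$; a direct computation in the singular basis gives $w = v^* + m/m_{11}$, hence $\norm{Aw}^2 = \sigma_1^2 + \norm{Am}^2/m_{11}^2 \ge \sigma_1^2$ and $\norm{w}^2 = 1 + \norm{m}^2/m_{11}^2$, so the unit vector $v = w/\norm{w}$ satisfies
\[
\FN{A} - \norm{Av}^2 \ \le\ \Delta + \sigma_1^2\,\frac{\norm{m}^2}{m_{11}^2}.
\]
On the typical event $\{m_{11} \ge \sigma_1^2/2\}$ the last term is at most $4\norm{m}^2/\sigma_1^2$, whose expectation is $O(\FN{A}\,\Delta/(s\,\sigma_1^2)) = O(\Delta/s)$ since $\sigma_1^2 \ge (1-\veps)\FN{A}$ in this regime; so $\Omega(1/\veps)$ samples suffice on this event.

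\textbf{The main obstacle.} What remains, and is the heart of the argument, is the event $\{m_{11} < \sigma_1^2/2\}$, on which $m_{11}$ badly underestimates $\sigma_1^2$ and the witness $w$ can be poor: there one can only bound the rank‑$1$ error crudely by $\FN{A}$, which in the near‑rank‑one regime is far larger than $\Delta$. A second‑moment (Chebyshev) bound only gives $\pr[m_{11} < \sigma_1^2/2] = O(1/s)$, and $\FN{A}\cdot O(1/s)$ need not be $O(\veps\Delta)$ when $\Delta/\FN{A}$ is tiny. The way out is that $m_{11} < \sigma_1^2/2$ forces a large fraction of the sampled rows to be ``noise‑dominated'' ($\norm{z^{(i_j)}}^2 \ge a_{i_j}^2$), and each row is noise‑dominated independently with probability at most $2\Delta/\FN{A}$, so this event is exponentially unlikely in $s$ with a rate that sharpens precisely as $\Delta/\FN{A}$ shrinks — enough that $\FN{A}\cdot\pr[m_{11} < \sigma_1^2/2]$ can be absorbed into $\veps\Delta$. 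Making this cancellation between the (small) failure probability and the (potentially large) penalty $\FN{A}$ work uniformly over $0 < \Delta < \veps\,\FN{A}$, and dovetailing it with the concentration statements for $m_{11}$, $m$ and $\hat A^T\hat A - A^TA$, is where essentially all of the technical effort goes; combined with the far‑from‑rank‑one case, a careful accounting yields the theorem for $s = \Omega(1/\veps^4)$.
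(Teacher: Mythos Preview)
Your overall architecture — split into ``far from rank one'' and ``close to rank one'', build an explicit witness in $\mathrm{span}(S)$, and exploit that the cross term $m$ has second moment scaling like $\Delta$ rather than $\FN{A}$ — is sound and parallels the paper. But the argument you give for the acknowledged main obstacle does not work as stated. From $m_{11}<\sigma_1^2/2$ one gets $\bar\beta:=\tfrac1s\sum_j \beta_j>\tfrac12+\tfrac\delta2$ with $\beta_j=\norm{z^{(i_j)}}^2/\norm{A^{(i_j)}}^2$ and $\delta=\Delta/\FN{A}$; splitting into signal-- and noise--dominated samples only forces at least a $\delta$--fraction of the $s$ samples to be noise--dominated, not a ``large fraction''. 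Since each sample is noise--dominated with probability up to $2\delta$, the \emph{expected} fraction is already $2\delta$, and having a $\delta$--fraction is not a tail event at all --- so this route gives no usable bound on $\pr[m_{11}<\sigma_1^2/2]$. The conclusion you want is nevertheless true: applying the relative--entropy Chernoff bound directly to the $[0,1]$--valued i.i.d.\ variables $\beta_j$ with mean $\delta$ gives $\pr[\bar\beta\ge\tfrac12]\le\exp(-sD(\tfrac12\|\delta))\le(4\delta)^{s/2}$, and then $\FN{A}\cdot(4\delta)^{s/2}\le\veps\Delta$ does hold uniformly in $\delta\in(0,\veps]$ once $s$ exceeds an absolute constant (for, say, $\veps<1/8$). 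So your plan can be completed, but not via the coarse noise--dominated counting you describe.

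The paper avoids this bad--event analysis altogether by a different choice of witness. It sets the regime threshold at $r^2\le\veps^3\sigma^2$ (so Theorem~\ref{thm:frieze} with parameter $\veps^4$ handles the complementary case), declares a row ``bad'' when $r_i^2\ge\veps\sigma^2u_i^2$, and then takes $\mathbf{t}=\tfrac1l\sum_j \mathbf{t}^{(j)}$ with $\mathbf{t}^{(j)}=\mathbf{s}^{(j)}/(\sigma u_{i_j})$ for good samples and $\mathbf{t}^{(j)}=0$ for bad ones; the rank--one candidate is $X^{(i)}=\sigma u_i\mathbf{t}$. The point is that the normalization is by the deterministic per--row quantity $\sigma u_{i_j}$, never by a sample--dependent sum like your $m_{11}$, so nothing can blow up; bad rows are simply zeroed out, and since a sampled row is bad with probability at most $2r^2/(\veps\sigma^2)\le2\veps^2$, this costs little. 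The entire analysis then runs on first and second moments --- no tail bounds needed. Your $Mv^*/m_{11}$ witness is the natural ``one step of power iteration'' choice and leads to the clean inequality $\FN{A}-\norm{Av}^2\le\Delta+\sigma_1^2\norm{m}^2/m_{11}^2$, but the price is exactly the denominator issue you identified; the paper's row--wise normalization with hard filtering is less elegant but makes that issue disappear.
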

We prove our main result in the next section. 
Before we do this, let us discuss the application of our results in the {\em streaming setting} that is relevant for big data analysis where $n$ and $d$ are very large\footnote{In this setting, one is allowed to make a few passes over the dataset while using limited amount of workspace. That is, the amount of space the should not scale linearly with the dataset size. This makes sense for big data analysis where it may not be possible to hold the entire dataset in the memory for processing.
}.
Note that length-squared sampling will naturally give a 2-pass streaming algorithm that uses $O(\frac{n+d}{\veps^4}\log{nd})$ space. 
Here, in the first pass, we perform length-squared sampling 
using {\em reservoir sampling}\footnote{In order to maintain a single sample one does the following. The first row is stored with probability $1$. On seeing the $i^{th}$ row ($i > 1$), the stored item is replaced wth $\AV{i}$ with probability $\frac{\norm{\AV{i}}^2}{\sum_{j = 1}^{i} \norm{\AV{j}}^2}$. A simple telescoping product shows that the rows get sampled with the desired probability.}.
In the second pass, we project all the points in the space spanned by the sampled rows and find the best fit line in this smaller dimensional space. 
It is important to note that a streaming algorithm with similar space bound that works using only one pass is known~\cite{clarkson}. 
So, our note is more about the properties of length-squared sampling than streaming algorithms for rank-1 approximation.

\subsection{Related work}
Low rank approximation of matrices has large number of applications in information retrieval and data mining (see e.g.~\cite{dkr02,prtv00,afkm01,dfkvv04}). There has been lot of recent activity in obtaining low rank approximations in time depending on the number of non-zero entries in the input matrix~\cite{cw17,s06,dv06,p14}. All of these methods rely on computing suitable random projections of the input matrix. 
Length-squared sampling is a natural sampling algorithm and has had applications in many problems involving matrix computations~\cite{dk01,dk03,fkv04,dfkvv04}. 
As mentioned earlier, Frieze \etal~\cite{fkv04} showed that this can also be used for obtaining low rank approximations, but one needs to incur an additive error term. 
This restriction was removed in subsequent works~\cite{dv06,s06,dmm06,dmm06b,dmm06c,ndt09,mz11,dmms11} but using different techniques. 
Our main contribution is to show that length-squared sampling is sufficient to obtain a bounded multiplicative error for rank-1 approximation.

\section{Rank-$1$ approximation}
We prove our main result in this section.
Before delving into the proof, we give some intuition behind the analysis. 
By a suitable rotation, we can assume that $\pi_1(A)$, the closest rank-$1$ matrix to $A$ in terms of Frobenius norm, is the first column of $A$. 
Let $\sigma^2$ and $r^2$ denote $\FN{\pi_1(A)}$ and
$\FN{A-\pi_1(A)}$ respectively. If $r$ is large compared to $\sigma$, then the additive guarantee given by Theorem~\ref{thm:frieze} implies a multiplicative guarantee as well. So the non-trivial case is when $r \ll \sigma.$ Let $r_i$ and $\sigma_i$ denote the contribution towards $r$ and $\sigma$ from the $i^{th}$ row respectively. So for most rows, $r_i \ll \sigma_i$ -- we call such rows {\em good} rows. When we sample a good row, the normalized row will be close to the vector $(1, 0, \ldots, 0)$. The heart of the analysis relies on showing that  the {\em average} of such normalized sampled rows will be close to $(1, 0, \ldots, 0)$. In other words, we need to bound the variance term corresponding to rows which are not good. 

Let $\AV{i}$ denote the $i^{th}$ row of matrix $A$. 
Let $\mathbf{v}$ denote the unit vector such that $\norm{A\mathbf{v}}^2$ is maximised.
Note that $\mathbf{v}$ is the largest singular vector of matrix $A$.
We assume without loss of generality that $\mathbf{v} = (1, 0, 0, ..., 0)$\footnote{For a matrix $A$ and a unitary matrix $Q$ of appropriate dimension, the lengths of the corresponding rows in $A$ and $AQ$ are the same. So we can choose a suitable $Q$ such that $\mathbf{v}$ has the mentioned property}.
Let $\sigma^2 = \norm{A\mathbf{v}}^2$.
So, we can write $\AV{i} \equiv (\sigma u_i, \rv{i})$, where $\sum_i u_i^2 = 1$ and $\rv{i}$ is a vector of dimension $(d-1)$.
Let $r_i \equiv \norm{\rv{i}}$ and $r^2 \equiv  \sum_i \norm{\rv{i}}^2 = \sum_i r_i^2$.
The following lemma states that under the assumption that $r^2$ is significantly larger than $\sigma^2$, the conclusion of our main theorem holds due to Theorem~\ref{thm:frieze} (as stated  in~\cite{fkv04}).

\begin{lemma}\label{lemma:fkv-lemma}
If $r^2 > \veps^3 \sigma^2$, then there is a rank-1 matrix $\TA$ in the span of $\Omega \left(\frac{1}{\veps^4}\right)$ independently sampled rows of $A$ sampled with length-squared distribution such that  $\E[\FN{A - \TA}] \leq (1 + \veps) \cdot \FN{A - \pi_1(A)}$.
\end{lemma}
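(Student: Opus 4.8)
The plan is to reduce this lemma directly to Theorem~\ref{thm:frieze}, exploiting the hypothesis $r^2 > \veps^3 \sigma^2$ to convert the additive error $\veps' \cdot \FN{A}$ guaranteed there into a multiplicative one. First I would record the two basic identities that follow from the coordinate normalization $\mathbf{v} = (1,0,\ldots,0)$: since the best rank-$1$ approximation $\pi_1(A)$ is then exactly the matrix consisting of the first column of $A$ (with all other entries zero), we have $\FN{A - \pi_1(A)} = \sum_i r_i^2 = r^2$ and $\FN{A} = \sigma^2 + r^2$.

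Next I would invoke Theorem~\ref{thm:frieze} with an accuracy parameter $\veps'$ to be fixed later: a length-squared sample $S$ of $s = \Omega(1/\veps')$ rows contains a rank-$1$ matrix $\TA$ in its span with $\E[\FN{A - \TA}] \le \FN{A - \pi_1(A)} + \veps' \cdot \FN{A} = r^2 + \veps'(\sigma^2 + r^2)$. Using $\sigma^2 < r^2/\veps^3$ and $0 < \veps < 1$ gives $\sigma^2 + r^2 < (1 + \veps^3)\, r^2/\veps^3 < 2 r^2/\veps^3$, hence $\E[\FN{A-\TA}] < r^2\bigl(1 + 2\veps'/\veps^3\bigr)$. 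Choosing $\veps' = \veps^4/2 = \Theta(\veps^4)$ makes $2\veps'/\veps^3 = \veps$, so $\E[\FN{A-\TA}] \le (1+\veps)\, r^2 = (1+\veps)\,\FN{A - \pi_1(A)}$, while the sample size is $s = \Omega(1/\veps') = \Omega(1/\veps^4)$, as required.

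I do not expect a genuine obstacle here; the content is essentially bookkeeping with the FKV bound. The only points needing care are (i) confirming that the rotation used to place $\mathbf{v}$ at $(1,0,\ldots,0)$ preserves all Frobenius norms and indeed identifies $\pi_1(A)$ with the first column of $A$, so that the substitutions $\FN{A-\pi_1(A)} = r^2$ and $\FN{A} = \sigma^2 + r^2$ are legitimate, and (ii) tracking the constants so that $\veps' = \Theta(\veps^4)$ really yields the claimed $\Omega(1/\veps^4)$ sample size. The substance of the paper lies entirely in the complementary regime $r^2 \le \veps^3 \sigma^2$, which this lemma deliberately sidesteps.
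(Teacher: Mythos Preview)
Your proposal is correct and follows essentially the same route as the paper: identify $\FN{A-\pi_1(A)}=r^2$ and $\FN{A}=\sigma^2+r^2$, apply Theorem~\ref{thm:frieze} with accuracy parameter $\veps'=\veps^4/2$, and use $r^2>\veps^3\sigma^2$ to absorb the additive term $\veps'(\sigma^2+r^2)$ into $\veps\,r^2$. The only cosmetic difference is that you bound $\sigma^2+r^2<2r^2/\veps^3$ first and then multiply, whereas the paper multiplies through directly; the arithmetic and the resulting $\Omega(1/\veps^4)$ sample size are the same.
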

\begin{proof}
Note that since $\mathbf{v}$ maximises $\norm{A\mathbf{v}}^2$, we have $\pi_1(A) = \left(\begin{smallmatrix} \sigma u_1, 0, ..., 0\\\vdots \\ \sigma u_2, 0, ..., 0 \end{smallmatrix}\right),$ which implies that $\FN{A- \pi_1(A)} = r^2$. Also, $\FN{A} = \sigma^2 + r^2$. Combining the above with Theorem~\ref{thm:frieze} (where we use $\frac{\veps^4}{2}$ for $\veps$), we get:
\begin{eqnarray*}
\E[\FN{A - \TA}] \leq r^2 + \frac{\veps^3}{2} \cdot (r^2 + \sigma^2) 
\leq (1 + \veps) \cdot r^2 
= (1 + \veps) \cdot \FN{A - \pi_1(A)}.
\end{eqnarray*}
This completes the proof of the lemma.\qed
\end{proof}
For the remainder of the proof, we will assume that 
\begin{equation}\label{eqn:assume}
r^2 \leq \veps^3 \sigma^2.
\end{equation}
Let $\mathbf{s}$ be a randomly sampled row of matrix $A$ sampled with length-squared distribution and let $\sv{1}, \sv{2}, ..., \sv{l}$ be $l$ independent copies of $\mathbf{s}$.
We would like to define a deterministic procedure to construct a (random) rank-1 matrix $X$ using $\sv{1}, ..., \sv{l}$, where each row of $X$ lies in the span of $\sv{1}, ..., \sv{l}$, such that the expected value of $\FN{A - X}$ is at most $(1+\veps) \cdot r^2$.
Another (geometric) way of saying this is that there is a point $\mathbf{t}$ in the span of $\sv{1}, ..., \sv{l}$ such that the squared distance of rows of $A$ from the line $\ell(\mathbf{t})$ is at most $(1+\veps)$ times of that from the best-fit line $\ell(\mathbf{v})$. Here $\ell(.)$ denotes the line passing through the given point and origin $o = (0, ..., 0)$.

We will need a few more definitions to give the procedure that defines $X$ from $\sv{1}, ..., \sv{l}$.
We first divide the rows into ``good" and ``bad". A row $\AV{i}$ is said to be good if 
\begin{equation}\label{eqn:good-row}
r_i^2 < \veps \sigma^2 u_i^2,
\end{equation}
otherwise it is bad. 
We now give the procedure for mapping the randomly length-squared sampled $\sv{1}, ..., \sv{l}$ to an appropriate matrix $X$.

\begin{framed}
\hspace*{-0.4in}\ \ \ {\tt Mapping($\sv{1}, ..., \sv{l}$)}\\
\hspace*{0.0in}  \ \ \ - For all $i \in \{1, ..., l\}$: \\
\hspace*{0.3in}  \ \ \ - If ($\sv{i}$ is a bad row) then $\tv{i} \leftarrow (0, ..., 0)$\\
\hspace*{0.3in}  \ \ \ - Else $\tv{i} \leftarrow \frac{\sv{i}}{\sigma u_i}$\\
\hspace*{0.0in} \ \ \ - $\mathbf{t} \leftarrow \frac{\sum_{i=1}^{l} \tv{i}}{l}$\\
\hspace*{0.0in} \ \ \ - For all $i \in \{1, ..., n\}$: $\XV{i} \leftarrow \sigma u_i \mathbf{t}$\\
\hspace*{0.0in} \ \ \ - $X \leftarrow \left(\begin{smallmatrix} \XV{1} \\ \vdots \\ \XV{n}\end{smallmatrix} \right)$
\end{framed}

The intuition is that if $\sv{i}$ is a good row, then $\tv{i}$ will be close to 
$\mathbf{v} = (1,0, \ldots, 0)$. So $X$ will be very close to $\pi_1(A)$. 
Note that the above defined procedure is only meant for the analysis and is never actually executed.
Also it is easy to see that the $n \times d$ matrix $X$ defined above is a rank-1 matrix.
We will now bound $\E[\FN{A - X}]$, which is the same as $\sum_i \E[\FN{\AV{i} - \XV{i}}]$.
We start with a simple lemma that bounds the probability of sampling a bad row.

\begin{lemma}\label{lem:bad}
The probability that a sampled row $\mathbf{s}$, sampled using length-squared distribution, is bad is at most $\frac{2r^2}{\veps \sigma^2}$.
\end{lemma}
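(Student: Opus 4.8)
The plan is to expand the definition of the length-squared sampling probability and then use the defining inequality of a bad row to control each term in the sum. Recall that row $\AV{i} \equiv (\sigma u_i, \rv{i})$ has norm-squared $\norm{\AV{i}}^2 = \sigma^2 u_i^2 + r_i^2$, and $\FN{A} = \sigma^2 + r^2$, so the sampling probability is $p_i = \frac{\sigma^2 u_i^2 + r_i^2}{\sigma^2 + r^2}$. Hence the quantity to bound is $\Pr[\mathbf{s}\text{ is bad}] = \sum_{i\,:\,\AV{i}\text{ bad}} \frac{\sigma^2 u_i^2 + r_i^2}{\sigma^2 + r^2}$.

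First I would handle the numerator term by term. For a bad row, by definition~\eqref{eqn:good-row} fails, i.e.\ $r_i^2 \geq \veps \sigma^2 u_i^2$, which rearranges to $\sigma^2 u_i^2 \leq r_i^2/\veps$. Therefore $\sigma^2 u_i^2 + r_i^2 \leq r_i^2/\veps + r_i^2 = r_i^2(1 + 1/\veps)$, and since $0 < \veps < 1$ we have $1 \leq 1/\veps$, so $1 + 1/\veps \leq 2/\veps$ and the numerator is at most $\frac{2 r_i^2}{\veps}$.

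Next I would sum over all bad rows and bound the denominator from below. Plugging in the per-term bound gives $\Pr[\mathbf{s}\text{ is bad}] \leq \frac{2}{\veps(\sigma^2 + r^2)} \sum_{i\,:\,\AV{i}\text{ bad}} r_i^2$. Since $\sum_{i\,:\,\AV{i}\text{ bad}} r_i^2 \leq \sum_{i} r_i^2 = r^2$ and $\sigma^2 + r^2 \geq \sigma^2$, this is at most $\frac{2 r^2}{\veps \sigma^2}$, which is exactly the claimed bound. There is no real obstacle here; the only point requiring a moment of care is the use of $\veps < 1$ to absorb the additive $r_i^2$ into the $\frac{2}{\veps}$ factor, and this is precisely why the statement carries the constant $2$ rather than $1$.
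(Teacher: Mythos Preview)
Your proof is correct and follows essentially the same approach as the paper: write $p_i = \frac{\sigma^2 u_i^2 + r_i^2}{\sigma^2 + r^2}$, use the bad-row inequality $\sigma^2 u_i^2 \leq r_i^2/\veps$ to bound the numerator by $(1+1/\veps)r_i^2 \leq 2r_i^2/\veps$, drop the $r^2$ from the denominator, and sum $r_i^2$ up to $r^2$. The only cosmetic difference is the order in which you apply the denominator and numerator bounds.
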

\begin{proof}
The probability of sampling a bad row $\AV{i}$ is given by $\frac{r_i^2 + \sigma^2 u_i^2}{\sigma^2 + r^2} \leq \frac{r_i^2 + \sigma^2 u_i^2}{\sigma^2}$. 
So, the probability that a sample row is bad is at most $\frac{\sum_{i \textrm{\ is bad}} (r_i^2 + \sigma^2 u_i^2)}{\sigma^2} \leq \frac{\sum_{i \textrm{\ is bad}} (1 + \frac{1}{\veps}) r_i^2 }{\sigma^2} \leq \frac{2 r^2}{\veps \sigma^2}$.\qed
\end{proof}

Let $\AV{i}_1$ denote the first coordinate of $\AV{i}$. Define $\XV{i}_1$ and $\tv{i}_1$ similarly.
We first estimate $\sum_i \E[(\AV{i}_1 - \XV{i}_1)^2]$.

\begin{lemma}\label{lemma:first-coordinate}
$\sum_i \E[(\AV{i}_1 - \XV{i}_1)^2] \leq 5 \veps r^2$.
\end{lemma}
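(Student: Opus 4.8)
The plan is to reduce the claim to a one-dimensional estimate. Since $\mathbf{v}=(1,0,\dots,0)$ we have $\AV{i}_1=\sigma u_i$, while by construction $\XV{i}=\sigma u_i\mathbf{t}$ and hence $\XV{i}_1=\sigma u_i\,\mathbf{t}_1$. Therefore $\AV{i}_1-\XV{i}_1=\sigma u_i(1-\mathbf{t}_1)$ and, using $\sum_i u_i^2=1$,
\[
\sum_i \E\big[(\AV{i}_1-\XV{i}_1)^2\big]=\sigma^2\Big(\sum_i u_i^2\Big)\E\big[(1-\mathbf{t}_1)^2\big]=\sigma^2\,\E\big[(1-\mathbf{t}_1)^2\big].
\]
So it suffices to show $\sigma^2\,\E[(1-\mathbf{t}_1)^2]\le 5\veps r^2$.

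Next I would pin down the distribution of $\mathbf{t}_1$. If the sampled row $\sv{k}$ equals a good row $\AV{j}$, then $\tv{k}=\sv{k}/(\sigma u_j)=\AV{j}/(\sigma u_j)$ has first coordinate exactly $1$; if $\sv{k}$ is bad, then $\tv{k}=\mathbf{0}$ and its first coordinate is $0$. Hence $\mathbf{t}_1=\frac1l\sum_{k=1}^l B_k$ where $B_k$ is the indicator of the event ``$\sv{k}$ is good''. The $B_k$ are i.i.d.\ Bernoulli with parameter $p:=\Pr[\mathbf{s}\text{ is good}]$; writing $q=1-p$, Lemma~\ref{lem:bad} gives $q\le\frac{2r^2}{\veps\sigma^2}$. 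Since $1-\mathbf{t}_1=\frac1l\sum_k(1-B_k)$ is an average of $l$ i.i.d.\ Bernoulli($q$) variables,
\[
\E\big[(1-\mathbf{t}_1)^2\big]=\operatorname{Var}(1-\mathbf{t}_1)+\big(\E[1-\mathbf{t}_1]\big)^2\le\frac{q}{l}+q^2 .
\]

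To finish, I would substitute these estimates and balance the two error terms. Using $q\le\frac{2r^2}{\veps\sigma^2}$,
\[
\sigma^2\,\E\big[(1-\mathbf{t}_1)^2\big]\le\frac{\sigma^2 q}{l}+\sigma^2q^2\le\frac{2r^2}{\veps l}+\frac{4r^4}{\veps^2\sigma^2}.
\]
The sample size $l=\Omega(1/\veps^4)$ (here $l=\Omega(1/\veps^2)$ already suffices, with an appropriate constant) makes the first term at most $\veps r^2$, and the standing assumption~(\ref{eqn:assume}), $r^2\le\veps^3\sigma^2$, makes the second term $\frac{4r^2}{\veps^2}\cdot\frac{r^2}{\sigma^2}\le 4\veps r^2$; adding gives $5\veps r^2$.

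I do not expect a serious obstacle in this lemma: the two conceptual points are that the first-coordinate error collapses to the single scalar $\sigma^2\E[(1-\mathbf{t}_1)^2]$, and that $\mathbf{t}_1$ is simply (number of good samples)$/l$, so its mean-square deviation from $1$ splits into a variance term that the sample size $l$ kills and a bias term $q^2$ that the assumption $r^2\le\veps^3\sigma^2$ kills. The only care needed is bookkeeping the constants so that both contributions land below $\veps r^2$ (hence the factor $5$), and checking that the $l=\Omega(1/\veps^4)$ promised by Theorem~\ref{thm:main} is more than enough here — the binding constraint on the sample size will presumably come from the companion estimate on the remaining $d-1$ coordinates, not from this lemma.
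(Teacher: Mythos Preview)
Your proof is correct and follows essentially the same route as the paper: both observe that $\AV{i}_1-\XV{i}_1=\sigma u_i(1-\mathbf t_1)$, that $\tv{j}_1$ is the $\{0,1\}$ indicator of ``$\sv{j}$ is good'', and then bound $\E[(1-\mathbf t_1)^2]$ by a variance-plus-bias decomposition using Lemma~\ref{lem:bad} together with $r^2\le\veps^3\sigma^2$ and $l\ge 2/\veps^4$. Your presentation is slightly tidier (you sum over $i$ up front and phrase the estimate as $\operatorname{Var}+\text{bias}^2$), but the argument and the resulting bound $\frac{2r^2}{\veps l}+\frac{4r^4}{\veps^2\sigma^2}\le 5\veps r^2$ are identical to the paper's.
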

\begin{proof}
Fix an index $i$. Note that $\AV{i}_1$ is $\sigma u_i$ and $\XV{i}_1$ is $\sigma u_i \mathbf{t}_1$. Therefore:
\begin{eqnarray*}
(\AV{i}_1 - \XV{i}_1)^2 = \sigma^2 u_i^2 (1 - \mathbf{t}_1)^2 
\leq \frac{\sigma^2 u_i^2}{l^2} \left( \sum_{j=1}^{l} (1 - \tv{j}_1)\right)^2.
\end{eqnarray*}
The last inequality follows from the Jensen's inequality using $(1-x)^2$ as the convex function.
We obtain a bound on the expectation of $(\AV{i}_1 - \XV{i}_1)^2$ from the above.
\begin{equation}\label{eqn:3}
\E[(\AV{i}_1 - \XV{i}_1)^2] \leq \frac{\sigma^2 u_i^2}{l^2} \cdot \left( \sum_{j=1}^{l} \E \left[\left(1 - \tv{j}_1 \right)^2 \right] \right) +  \frac{\sigma^2 u_i^2}{l^2} \cdot \left( \sum_{j=1}^{l}\E \left[\left(1 - \tv{j}_1 \right) \right]\right)^2
\end{equation}
The previous inequality follows from the independence of random variables $\tv{1}_1, ..., \tv{l}_1$.
Lemma~\ref{lem:bad} shows that the probability of the Bernoulli random variable $\tv{j}_1$ being $0$ for any $j$ is at most $\frac{2r^2}{\veps \sigma^2}$. 
So, $\E[(1 - \tv{j}_1)] \leq \frac{2r^2}{\veps \sigma^2}$ and $\E[(1 - \tv{j}_1)^2] \leq \frac{2r^2}{\veps \sigma^2}$ (note that  $\tv{j}_1$ is either 0 or 1). Substituting this in~(\ref{eqn:3}) above, we get 
\[
\E[(\AV{i}_1 - \XV{i}_1)^2] \leq \frac{2 u_i^2 r^2}{\veps l} + \frac{4 u_i^2 r^4}{\veps^2 \sigma^2}
\]
The lemma follows from the facts that $\sum_i u_i^2 = 1$, $r^2 \leq \veps^3 \sigma^2$ and $l \geq \frac{2}{\veps^4}$.\qed
\end{proof}

Now we estimate the contribution towards $\sum_i \E[\norm{\AV{i} - \XV{i}}^2]$ from coordinates other than the first coordinate.
If $\mathbf{z}$ denotes the vector obtained from $\mathbf{t}$ by removing the first coordinate, then observe that $\sum_i \sum_{d=2}^{l} (\AV{i}_j - \XV{i}_j)^2 = \sum_i \norm{\rv{i} - \sigma u_i \mathbf{z}}^2$.
Now, observe that
\begin{equation}\label{eqn:4}
\sum_i \E[\norm{\rv{i} - \sigma u_i \mathbf{z}}^2] = r^2 - 2 \sigma \langle \sum_i u_i \rv{i}, \E[\mathbf{z}] \rangle + \sigma^2 \E[\norm{\mathbf{z}}^2]
\end{equation}

We will now estimate each of the terms above. 
First, note that if $\zv{j}$ denotes the vector obtained from $\tv{j}$ by removing the first coordinate, then $\mathbf{z} = \frac{1}{l} \sum_{j=1}^{l} \zv{j}$.
Let $G$ denote the index set of good rows. Then observe that for any $j$,
\begin{equation}\label{eqn:5}
\E[\zv{j}] = \sum_{k \in G} p_k \frac{\rv{k}}{u_k \sigma},
\end{equation}
where $p_k$ is the probability of length-squared sampling the $k^{th}$ row of $A$.
Since the vectors $\tv{j}$ are chosen i.i.d, the expectation of $\mathbf{z}$ can also be written in terms of the above expression (i.e., RHS of (\ref{eqn:5})).
Next, we show a useful inequality for good rows.

\begin{fact}\label{lemma:good-p}
If row $\AV{i}$ if good, then $|u_i - \frac{p_i}{u_i}| \leq \veps |u_i|$.
\end{fact}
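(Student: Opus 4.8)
The plan is to unwind the definition of the length-squared probability and reduce the claimed inequality to a one-line estimate. Recall that $\AV{i} \equiv (\sigma u_i, \rv{i})$ with $r_i = \norm{\rv{i}}$, so $\norm{\AV{i}}^2 = \sigma^2 u_i^2 + r_i^2$ and $\FN{A} = \sigma^2 + r^2$; hence
\[
p_i = \frac{\sigma^2 u_i^2 + r_i^2}{\sigma^2 + r^2}.
\]
First I would observe that a good row has $u_i \neq 0$: condition~(\ref{eqn:good-row}) reads $r_i^2 < \veps \sigma^2 u_i^2$, which forces $u_i \neq 0$ (otherwise $r_i^2 < 0$), so the quantity $p_i/u_i$ is well defined. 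Then a direct computation gives
\[
u_i - \frac{p_i}{u_i} = \frac{u_i^2(\sigma^2 + r^2) - (\sigma^2 u_i^2 + r_i^2)}{u_i(\sigma^2 + r^2)} = \frac{u_i^2 r^2 - r_i^2}{u_i(\sigma^2 + r^2)},
\]
so that $\bigl|u_i - \tfrac{p_i}{u_i}\bigr| = \dfrac{|u_i^2 r^2 - r_i^2|}{|u_i|(\sigma^2 + r^2)}$.

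Next I would bound the numerator. Since $u_i^2 r^2$ and $r_i^2$ are both nonnegative, $|u_i^2 r^2 - r_i^2| \leq \max(u_i^2 r^2,\, r_i^2)$. For the second term, goodness~(\ref{eqn:good-row}) gives $r_i^2 < \veps \sigma^2 u_i^2 \leq \veps u_i^2(\sigma^2 + r^2)$. For the first term, assumption~(\ref{eqn:assume}) gives $u_i^2 r^2 \leq \veps^3 \sigma^2 u_i^2 \leq \veps \sigma^2 u_i^2 \leq \veps u_i^2(\sigma^2 + r^2)$, where we used $\veps < 1$. Combining, $|u_i^2 r^2 - r_i^2| \leq \veps u_i^2(\sigma^2 + r^2)$, and dividing by $|u_i|(\sigma^2 + r^2)$ yields $\bigl|u_i - \tfrac{p_i}{u_i}\bigr| \leq \veps |u_i|$, as claimed.

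There is no real obstacle here — the statement is a short algebraic manipulation followed by two applications of the standing hypotheses. The only points requiring mild care are the sign of $u_i$ (handled by working with absolute values throughout) and the degenerate case $u_i = 0$ (ruled out by the goodness condition), both of which I would flag explicitly rather than gloss over.
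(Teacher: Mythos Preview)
Your proof is correct and follows essentially the same approach as the paper: plug in the length-squared formula for $p_i$ and invoke the goodness condition~(\ref{eqn:good-row}) together with the global assumption~(\ref{eqn:assume}). The only cosmetic difference is that the paper assumes $u_i>0$ without loss of generality and separately shows $(1-\veps)u_i \le p_i/u_i \le (1+\veps)u_i$, whereas you compute the difference $u_i - p_i/u_i$ in closed form and bound its numerator directly; both routes amount to the same two estimates.
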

\begin{proof}
Assume that $u_i > 0$, otherwise we can replace $u_i$ by $-u_i$ in the following argument. Since row $\AV{i}$ is good, we know that $r_i^2 \leq \veps \sigma^2 u_i^2$. Also note that length squared sampling means that $p_i = \frac{\sigma^2 u_i^2 + r_i^2}{\sigma^2 + r^2}$. Using these we get:
\[
\frac{p_i}{u_i} \leq \frac{\sigma^2 u_i^2 + r_i^2}{\sigma^2 u_i} = u_i + \frac{r_i^2}{\sigma^2 u_i} \leq (1+\veps) \cdot u_i,
\]
and
\[
\frac{p_i}{u_i} \geq \frac{\sigma^2 u_i}{r^2 + \sigma^2} \stackrel{(\ref{eqn:assume})}{\geq} \frac{u_i}{\veps^3 + 1} \geq (1-\veps) \cdot u_i
\]
The lemma follows from the above two inequalities.\qed
\end{proof}

\noindent

The next lemma gives an upper bound on $\norm{\sigma \cdot \E[\mathbf{z}] - \sum_i u_i \rv{i}}$.

\begin{lemma}\label{lemma:mid-1}
$\norm{\sigma \cdot \E[\mathbf{z}] - \sum_i u_i \rv{i}} \leq 2 \veps r$.
\end{lemma}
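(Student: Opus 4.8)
The plan is to work directly from equation~(\ref{eqn:5}). Since the $\tv{j}$ (hence $\zv{j}$) are i.i.d., we have $\sigma \cdot \E[\mathbf{z}] = \sigma \cdot \E[\zv{j}] = \sum_{k \in G} \frac{p_k}{u_k} \rv{k}$. Writing $\sum_i u_i \rv{i} = \sum_{k \in G} u_k \rv{k} + \sum_{i \notin G} u_i \rv{i}$ and subtracting, the quantity we must bound becomes
\[
\sigma \cdot \E[\mathbf{z}] - \sum_i u_i \rv{i} = \sum_{k \in G} \left( \frac{p_k}{u_k} - u_k \right) \rv{k} \ - \ \sum_{i \notin G} u_i \rv{i}.
\]
By the triangle inequality it then suffices to bound the norm of each of the two sums by $\veps r$.

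For the first (good-row) sum, I would apply the triangle inequality and Fact~\ref{lemma:good-p}, which gives $\bigl| \frac{p_k}{u_k} - u_k \bigr| \le \veps |u_k|$ for $k \in G$, so the norm is at most $\veps \sum_{k \in G} |u_k| \, r_k$. Cauchy--Schwarz together with $\sum_k u_k^2 = 1$ and $\sum_k r_k^2 = r^2$ bounds $\sum_{k \in G} |u_k| r_k \le r$, yielding the desired $\veps r$.

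For the second (bad-row) sum, the key point is that a naive Cauchy--Schwarz only gives a bound of $r$, which is not enough; instead I would use the definition of a bad row, $r_i^2 \ge \veps \sigma^2 u_i^2$, i.e. $|u_i| \le r_i / (\sqrt{\veps}\,\sigma)$. Then $\bigl\| \sum_{i \notin G} u_i \rv{i} \bigr\| \le \sum_{i \notin G} |u_i| r_i \le \frac{1}{\sqrt{\veps}\,\sigma} \sum_{i \notin G} r_i^2 \le \frac{r^2}{\sqrt{\veps}\,\sigma}$, and finally invoking assumption~(\ref{eqn:assume}), namely $r \le \veps^{3/2} \sigma$, gives $\frac{r^2}{\sqrt{\veps}\,\sigma} \le \veps r$. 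Adding the two bounds completes the proof.

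The only subtle point — and the one I would flag as the main obstacle — is the bad-row estimate: one has to exploit both the "badness" inequality (to convert $|u_i|$ into $r_i/(\sqrt{\veps}\sigma)$) and the global smallness assumption $r^2 \le \veps^3 \sigma^2$; using only one of the two, or applying Cauchy--Schwarz carelessly, loses the extra factor of $\veps$ that the lemma needs. Everything else is a routine triangle-inequality-plus-Cauchy--Schwarz computation.
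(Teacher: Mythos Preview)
Your proof is correct and follows essentially the same route as the paper: the same good/bad decomposition, the same use of Fact~\ref{lemma:good-p} plus Cauchy--Schwarz for the good part, and the same two ingredients (the bad-row inequality $|u_i| \le r_i/(\sqrt{\veps}\,\sigma)$ and assumption~(\ref{eqn:assume})) for the bad part. The only cosmetic difference is that the paper applies Cauchy--Schwarz to the bad sum first and then bounds $\sum_{i\notin G} u_i^2$ via the badness inequality, whereas you substitute $|u_i| \le r_i/(\sqrt{\veps}\,\sigma)$ directly; both yield $r^2/(\sqrt{\veps}\,\sigma) \le \veps r$.
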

\begin{proof}
Using the triangle inequality and an expression for $\E[\mathbf{z}]$ using (\ref{eqn:5}), we get that  : 
\begin{eqnarray*}
\norm{\sigma \cdot \E[\mathbf{z}] - \sum_i u_i \rv{i}}
&\leq& \norm{\sum_{i \notin G} u_i \rv{i}} + \norm{\sum_{i \in G} \left(u_i - \frac{p_i}{u_i} \right) \rv{i}}\\
& \leq &  \sum_{i \notin G} |u_i| r_i + 
\sum_{i \in G} |u_i - \frac{u_i}{p_i}| r_i \\
& \stackrel{\mbox{\tiny{Fact~\ref{lemma:good-p}}}}{\leq} &  \sum_{i \notin G} |u_i| r_i + 
\sum_{i \in G} \veps |u_i| r_i \\
&\stackrel{\tinym{Cauchy-Schwarz}}{\leq}& \left( \sum_{i \notin G} u_i^2\right)^{\frac{1}{2}} \left( \sum_{i \notin G} r_i^2\right)^{\frac{1}{2}} + \\
&& \qquad \veps \cdot \left( \sum_{i \in G}  u_i^2 \right)^{\half} \left( \sum_{i \in G} r_i^2\right)^{\half}\\
&\stackrel{ (\ref{eqn:good-row})}{\leq}& r \cdot \left( \sum_{i \notin G} \frac{r_i^2}{\veps \sigma^2}\right)^{\half} + \veps r \\
&\leq& r \cdot \left( \frac{r^2}{\veps \sigma^2}\right)^{\half} + \veps r \\
&\stackrel{ (\ref{eqn:assume})}{\leq}&2 \veps r 
\end{eqnarray*}
This completes the proof of the lemma.\qed
\end{proof}

\noindent
We now show a useful fact regarding $\rv{i}$.

\begin{fact}\label{lemma:fact}
$\norm{\sum_i u_i \rv{i}} \leq r$.
\end{fact}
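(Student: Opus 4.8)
The plan is to recognize $\sum_i u_i \rv{i}$ as the image of a unit vector under a linear map coming from (part of) the matrix $A$ itself. Let $B$ be the $n \times (d-1)$ matrix whose $i^{th}$ row is $\rv{i}$; equivalently, $B$ is $A$ with its first column deleted. Write $\mathbf{u} = (u_1, \ldots, u_n)$, which is a unit vector since $\sum_i u_i^2 = 1$. Then $\sum_i u_i \rv{i} = B^{\top}\mathbf{u}$, so $\norm{\sum_i u_i \rv{i}} = \norm{B^{\top}\mathbf{u}} \leq \norm{B}_{\mathrm{op}}$, the operator norm of $B$. Finally, the operator norm is bounded by the Frobenius norm, and $\FN{B} = \sum_i \norm{\rv{i}}^2 = r^2$, giving $\norm{\sum_i u_i \rv{i}} \leq r$.

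If a fully elementary argument is preferred (one that does not invoke any matrix-norm inequality), the same bound follows from two applications of Cauchy--Schwarz. Set $\mathbf{w} = \sum_i u_i \rv{i}$; the claim is trivial if $\mathbf{w} = \mathbf{0}$, so assume otherwise and let $\hat{\mathbf{w}} = \mathbf{w}/\norm{\mathbf{w}}$. Then
\[
\norm{\mathbf{w}} = \langle \mathbf{w}, \hat{\mathbf{w}} \rangle = \sum_i u_i \langle \rv{i}, \hat{\mathbf{w}} \rangle \leq \left(\sum_i u_i^2\right)^{\half} \left(\sum_i \langle \rv{i}, \hat{\mathbf{w}} \rangle^2\right)^{\half},
\]
where the inequality is Cauchy--Schwarz over the index $i$. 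The first factor equals $1$, and $\langle \rv{i}, \hat{\mathbf{w}} \rangle^2 \leq \norm{\rv{i}}^2$ for each $i$ (Cauchy--Schwarz again, since $\hat{\mathbf{w}}$ is a unit vector), so the second factor is at most $(\sum_i \norm{\rv{i}}^2)^{\half} = r$.

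There is essentially no obstacle here: the statement is a clean norm inequality and either route is a couple of lines. The only minor points worth noting are the degenerate case $\mathbf{w} = \mathbf{0}$ in the elementary version, and the fact that the bound is tight (equality holds when every $\rv{i}$ is a fixed multiple of $u_i$ times a common unit vector), so one should not expect to squeeze any extra slack out of it.
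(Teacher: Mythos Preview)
Your proposal is correct. The paper's own proof is the most direct variant: it applies the triangle inequality first to get $\norm{\sum_i u_i \rv{i}} \leq \sum_i |u_i|\,r_i$, and then a single Cauchy--Schwarz over the index $i$ to bound this by $(\sum_i u_i^2)^{1/2}(\sum_i r_i^2)^{1/2}=r$. Your two routes reach the same bound by slightly different means: the first recasts the sum as $B^\top\mathbf{u}$ and invokes $\norm{B}_{\mathrm{op}}\le\norm{B}_F$, while the second pairs with the unit direction $\hat{\mathbf w}$ and uses Cauchy--Schwarz twice. All three arguments are essentially equivalent for this elementary inequality; the paper's version is a line shorter, while your matrix formulation makes the role of $B$ (the ``residual'' part of $A$) more explicit, and your second version would in principle give the sharper intermediate bound $(\sum_i \langle \rv{i},\hat{\mathbf w}\rangle^2)^{1/2}$ before relaxing to $r$.
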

\begin{proof}
The statement follows by triangle inequality and Cauchy-Schwarz:
\[
\norm{\sum_i u_i \rv{i}} \leq \sum_i |u_i|r_i \leq 
\left( \sum_i u_i^2 \right)^{\frac{1}{2}} \cdot \left( \sum_i r_i^2\right)^{\frac{1}{2}} = r. 
\]
This completes the proof.\qed
\end{proof}

\noindent
The next lemma bounds the middle term of the RHS of~(\ref{eqn:4}).

\begin{lemma}\label{lemma:term-2}
$2 \sigma \cdot \langle \sum_i u_i \rv{i}, \E[\mathbf{z}]\rangle \geq 2 \norm{\sum_i u_i \rv{i}}^2 - 4 \veps \cdot r^2$.
\end{lemma}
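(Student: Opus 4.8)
The plan is to treat $\sigma\cdot\E[\mathbf{z}]$ as a small perturbation of the vector $w:=\sum_i u_i\rv{i}$ and then expand the inner product. Concretely, I would write $\sigma\cdot\E[\mathbf{z}] = w + \mathbf{e}$, where $\mathbf{e} := \sigma\cdot\E[\mathbf{z}] - \sum_i u_i\rv{i}$ is the error vector already controlled by Lemma~\ref{lemma:mid-1}, namely $\norm{\mathbf{e}}\leq 2\veps r$.

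Next I would substitute this into the left-hand side:
\[
2\sigma\cdot\langle \textstyle\sum_i u_i\rv{i},\ \E[\mathbf{z}]\rangle
= 2\langle w,\ w+\mathbf{e}\rangle
= 2\norm{w}^2 + 2\langle w,\ \mathbf{e}\rangle.
\]
Then apply Cauchy–Schwarz in the form $\langle w,\mathbf{e}\rangle \geq -\norm{w}\cdot\norm{\mathbf{e}}$, so that the expression is at least $2\norm{w}^2 - 2\norm{w}\cdot\norm{\mathbf{e}}$. Finally, bound $\norm{w} = \norm{\sum_i u_i\rv{i}}\leq r$ using Fact~\ref{lemma:fact} and $\norm{\mathbf{e}}\leq 2\veps r$ using Lemma~\ref{lemma:mid-1}, giving $2\norm{w}\cdot\norm{\mathbf{e}}\leq 4\veps r^2$, which yields the claimed inequality.

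There is essentially no real obstacle here: the substantive estimate — that $\sigma\cdot\E[\mathbf{z}]$ is within $2\veps r$ of $\sum_i u_i\rv{i}$ — has already been established in Lemma~\ref{lemma:mid-1}, and this lemma is just a one-line algebraic consequence combined with Cauchy–Schwarz and the crude bound of Fact~\ref{lemma:fact}. The only thing to be mildly careful about is the direction of the inequality (we want a lower bound, so Cauchy–Schwarz must be applied to the cross term $\langle w,\mathbf{e}\rangle$ as a one-sided bound, not to $\langle w,\E[\mathbf{z}]\rangle$ directly), and tracking that the constants combine to exactly $4\veps r^2$.
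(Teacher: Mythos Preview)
Your proposal is correct and essentially identical to the paper's own proof: the paper also writes $\sigma\cdot\E[\mathbf{z}]=\sum_j u_j\rv{j}+(\sigma\cdot\E[\mathbf{z}]-\sum_j u_j\rv{j})$, expands the inner product, and bounds the cross term via Cauchy--Schwarz together with Lemma~\ref{lemma:mid-1} and Fact~\ref{lemma:fact}. The only difference is cosmetic---you name the vectors $w$ and $\mathbf{e}$.
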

\begin{proof}
We have
\begin{eqnarray*}
\sigma \cdot \langle \sum_i u_i \rv{i}, \E[\mathbf{z}]\rangle = \langle \sum_i u_i \rv{i}, \sigma \cdot \E[\mathbf{z}]\rangle 
= \langle \sum_i u_i \rv{i}, \sum_j u_j \rv{j}\rangle + \langle \sum_i u_i \rv{i}, \sigma \cdot \E[\mathbf{z}] - \sum_j u_j \rv{j}\rangle
\end{eqnarray*}
We now get the following inequalities:
\begin{eqnarray*}
\left\vert \langle \sum_i u_i \rv{i}, \sigma \cdot \E[\mathbf{z}] - \sum_j u_j \rv{j}\rangle \right\vert 
\stackrel{\tinym{Cauchy-Schwarz}}{\leq} 
\norm{\sum_i u_i \rv{i}} \cdot \norm{\sigma \cdot \E[\mathbf{z}] - \sum_j u_j \rv{j}} 
\stackrel{\tinym{Lemma~\ref{lemma:mid-1}, Fact~\ref{lemma:fact}}}{\leq} 2 \veps r^2 
\end{eqnarray*}
The lemma follows from the above inequality.\qed
\end{proof}

We now bound the last RHS term of~(\ref{eqn:4}).
Since $\mathbf{z} = \frac{\zv{1} + \zv{2} + ... + \zv{l}}{l}$, it is easy to see that
\begin{equation}
\sigma^2 \E[\norm{\mathbf{z}}^2] \leq \frac{\sigma^2}{l} \E[\norm{\zv{j}}^2] + \sigma^2 \cdot \left( \norm{\E[\zv{j}]}\right)^2,
\end{equation}
where $j$ is an arbitrary index between $1$ and $l$.
We can bound the two terms on the RHS below. 

\begin{lemma}\label{lemma:term-3-1}
For any index $j$, $\frac{\sigma^2}{l} \cdot \E[\norm{\zv{j}}^2] \leq \veps r^2$.
\end{lemma}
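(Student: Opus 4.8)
The plan is to bound $\E[\norm{\zv{j}}^2]$ directly from its definition as an expectation over the length-squared sampling of a single row. Recall that $\zv{j}$ equals $\rv{k}/(u_k\sigma)$ when the sampled row is a good row $\AV{k}$, and equals the zero vector otherwise. Hence, writing $G$ for the set of good rows and $p_k$ for the length-squared sampling probability of row $k$,
\[
\E[\norm{\zv{j}}^2] = \sum_{k \in G} p_k \cdot \frac{\norm{\rv{k}}^2}{u_k^2 \sigma^2} = \sum_{k \in G} p_k \cdot \frac{r_k^2}{u_k^2 \sigma^2}.
\]
The first step is to control the factor $p_k / u_k^2$ for good rows. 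Since $p_k = (\sigma^2 u_k^2 + r_k^2)/(\sigma^2 + r^2) \leq (\sigma^2 u_k^2 + r_k^2)/\sigma^2 = u_k^2 + r_k^2/\sigma^2$, and for a good row $r_k^2 \leq \veps \sigma^2 u_k^2$, we get $p_k \leq u_k^2(1+\veps) \leq 2 u_k^2$. Therefore $p_k / u_k^2 \leq 2$.

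The second step is to plug this bound in and sum:
\[
\E[\norm{\zv{j}}^2] = \sum_{k \in G} \frac{p_k}{u_k^2} \cdot \frac{r_k^2}{\sigma^2} \leq \frac{2}{\sigma^2} \sum_{k \in G} r_k^2 \leq \frac{2 r^2}{\sigma^2}.
\]
The final step is to multiply by $\sigma^2/l$ and use the sample-size hypothesis. Since $l = \Omega(1/\veps^4)$, in particular $l \geq 2/\veps$, we obtain
\[
\frac{\sigma^2}{l}\cdot \E[\norm{\zv{j}}^2] \leq \frac{\sigma^2}{l} \cdot \frac{2 r^2}{\sigma^2} = \frac{2 r^2}{l} \leq \veps r^2,
\]
which is the claimed bound. (One could also route the final step through assumption~(\ref{eqn:assume}) instead, bounding $r^2 \leq \veps^3 \sigma^2$, but using $l$ large is cleaner and matches how Lemma~\ref{lemma:first-coordinate} was handled.)

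There is no real obstacle here; the only thing to be careful about is that the division by $u_k$ in the definition of $\tv{k}$ (hence $\zv{k}$) is harmless precisely because good rows have $u_k \neq 0$ — indeed $r_k^2 < \veps\sigma^2 u_k^2$ forces $u_k \neq 0$ whenever $r_k \neq 0$, and if $r_k = 0$ the term contributes nothing regardless. So the sum over $G$ is well-defined, and the cancellation of the $u_k^2$ in the numerator of $p_k$ against the $u_k^2$ in the denominator is exactly what keeps the variance term small.
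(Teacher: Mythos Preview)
Your proof is correct and follows essentially the same route as the paper: compute $\E[\norm{\zv{j}}^2]=\sum_{k\in G} p_k r_k^2/(u_k^2\sigma^2)$, bound $p_k/u_k^2\le 1+\veps$ for good rows, and then divide by $l$. The only cosmetic differences are that the paper cites Fact~\ref{lemma:good-p} for the bound $p_k/u_k^2\le 1+\veps$ (whereas you re-derive it inline) and invokes $l\ge 2/\veps^4$ rather than the weaker $l\ge 2/\veps$ you use; both are fine.
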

\begin{proof}
Note that $\zv{j}$ is equal to 
$\frac{\rv{k}}{u_k \sigma}$ with probability $p_k$. Therefore, 
We have
\[
\frac{\sigma^2}{l} \cdot \E[\norm{\zv{j}}^2] = \frac{\sigma^2}{l} \cdot \sum_{k \in G} \frac{p_k r_k^2}{u_k^2 \sigma^2} \stackrel{\tinym{Fact~\ref{lemma:good-p}}}{\leq} \frac{(1+\veps) r^2}{l}
\]
The lemma now follows from the fact that $l \geq \frac{2}{\veps^4}$.\qed
\end{proof}

\begin{lemma}\label{lemma:term-3-2}
For any index $j$, $\sigma^2 \cdot (\norm{\E[\zv{j}]})^2 \leq 2 \norm{\sum_{k} u_k \rv{k}}^2 + 4\veps^2 r^2$.
\end{lemma}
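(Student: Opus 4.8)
The plan is to put $\sigma\,\E[\zv{j}]$ in closed form using (\ref{eqn:5}) and compare it directly with $\sum_k u_k\rv{k}$, for which Fact~\ref{lemma:fact} already supplies $\norm{\sum_k u_k\rv{k}}\le r$. Since (\ref{eqn:5}) gives $\sigma\,\E[\zv{j}]=\sum_{k\in G}\frac{p_k}{u_k}\rv{k}$, I would write $\sigma\,\E[\zv{j}]=\sum_k u_k\rv{k}+b$ with
\[
b \;=\; \sum_{k\in G}\Bigl(\tfrac{p_k}{u_k}-u_k\Bigr)\rv{k}\;-\;\sum_{k\notin G}u_k\rv{k},
\]
and then the lemma reduces, via the elementary inequality $\norm{a+b}^2\le 2\norm{a}^2+2\norm{b}^2$ applied with $a=\sum_k u_k\rv{k}$, to establishing $\norm{b}\le\sqrt{2}\,\veps\,r$.

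To bound $\norm{b}$ I would handle the two sums separately, but track how the row norms partition: set $\rho^2:=\sum_{k\notin G}r_k^2$, so that $\sum_{k\in G}r_k^2=r^2-\rho^2$. For the good-row sum, the triangle inequality, Fact~\ref{lemma:good-p}, Cauchy--Schwarz and $\sum_{k\in G}u_k^2\le 1$ give $\norm{\sum_{k\in G}(\frac{p_k}{u_k}-u_k)\rv{k}}\le\veps\sum_{k\in G}|u_k|r_k\le\veps\sqrt{r^2-\rho^2}$. For the bad-row sum, a bad row satisfies $r_k^2\ge\veps\sigma^2 u_k^2$ (the negation of (\ref{eqn:good-row})), hence $\sum_{k\notin G}u_k^2\le\rho^2/(\veps\sigma^2)$; Cauchy--Schwarz then yields $\norm{\sum_{k\notin G}u_k\rv{k}}\le\rho^2/(\sqrt{\veps}\,\sigma)$, and (\ref{eqn:assume}) (which gives $r/\sigma\le\veps^{3/2}$) cuts this down to $\veps\,\rho$. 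Adding the two pieces and applying Cauchy--Schwarz one last time, $\norm{b}\le\veps\rho+\veps\sqrt{r^2-\rho^2}\le\sqrt{2}\,\veps\,r$. Plugging back in, $\sigma^2\norm{\E[\zv{j}]}^2=\norm{\sum_k u_k\rv{k}+b}^2\le 2\norm{\sum_k u_k\rv{k}}^2+2\norm{b}^2\le 2\norm{\sum_k u_k\rv{k}}^2+4\veps^2 r^2$, which is exactly the claim.

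The one step that needs genuine care — and the place a careless argument loses the stated constant — is that final Cauchy--Schwarz on $\veps\rho+\veps\sqrt{r^2-\rho^2}$: bounding the good-row and bad-row error contributions separately by $\veps r$ each (as in the proof of Lemma~\ref{lemma:mid-1}) only gives $\norm{b}\le 2\veps r$ and hence the weaker bound $8\veps^2 r^2$, so one must exploit that the two index sets are complementary and that $\rho^2+(r^2-\rho^2)=r^2$. Everything else is a routine assembly of the triangle inequality, Cauchy--Schwarz, Fact~\ref{lemma:good-p}, and the standing assumption (\ref{eqn:assume}). As a side remark, since $\mathbf{z}$ is the average of i.i.d.\ copies of $\zv{j}$ we have $\E[\mathbf{z}]=\E[\zv{j}]$, so this lemma simultaneously controls the term $\sigma^2\norm{\E[\mathbf{z}]}^2$ that will be needed when combining with (\ref{eqn:4}).
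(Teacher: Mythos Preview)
Your argument is correct and uses the same skeleton as the paper: write $\sigma\,\E[\zv{j}]=\sum_k u_k\rv{k}+b$ with $b=\sum_k\delta_k\rv{k}$ (where $\delta_k=\tfrac{p_k}{u_k}-u_k$ for $k\in G$ and $\delta_k=-u_k$ for $k\notin G$), apply $\norm{a+b}^2\le 2\norm{a}^2+2\norm{b}^2$, and then establish $\norm{b}^2\le 2\veps^2 r^2$.

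The only real difference is in how $\norm{b}$ is controlled. The paper passes directly from $\norm{\sum_k\delta_k\rv{k}}^2$ to $\sum_k\delta_k^2 r_k^2$ and then bounds the latter termwise using Fact~\ref{lemma:good-p}, (\ref{eqn:good-row}) and (\ref{eqn:assume}). You instead split $b$ into its good-row and bad-row parts, bound each by the triangle inequality and Cauchy--Schwarz, and then exploit the complementarity $\rho^2+(r^2-\rho^2)=r^2$ to recombine them into $\norm{b}\le\sqrt{2}\,\veps r$. Your route is slightly longer but every step is transparently justified; the paper's passage $\norm{\sum_k\delta_k\rv{k}}^2\le\sum_k\delta_k^2 r_k^2$ is not a general vector inequality and is left unexplained, so your version is in fact the more careful of the two. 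Your observation that the na\"ive bound (each piece $\le\veps r$) would only give $8\veps^2 r^2$ is exactly the point where your $\rho$-bookkeeping earns the stated constant.
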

\begin{proof}
For any index $k \in G$, let $\delta_k$ denote $\frac{p_k}{u_k} - u_k$. Fact~\ref{lemma:good-p} shows that $|\delta_k| \leq \veps |u_k|$.
For any index $k \notin G$, let $\delta_k$ denote $-u_k$. Using~(\ref{eqn:5}), we can write:
\begin{eqnarray*}
\sigma^2 \cdot (\norm{\E[\zv{j}]})^2 &=& \norm{\sum_{k \in G} p_k \frac{\rv{k}}{u_k}}^2 \\
&=& \norm{ \left(\sum_{k \in G} u_k \rv{k} + \sum_{k \in G} \delta_k \rv{k} \right)}^2   \qquad \textrm{(since $\delta_k = \frac{p_k}{u_k}-u_k$ for $k \in G$)}\\
&=& \norm{ \left(\sum_{k} u_k \rv{k} + \sum_{k} \delta_k \rv{k} \right)}^2  \qquad \textrm{(since $\delta_k = -u_k$ for $k \notin G$)}\\
&\leq& 2\ \norm{\sum_k u_k \rv{k}}^2 + 2\  \norm{\sum_k \delta_k \rv{k}}^2.
\end{eqnarray*}

The second term above can be bounded as follows:
\begin{eqnarray*}
\norm{\sum_k \delta_k \rv{k}}^2 
&\leq& \sum_k \delta_k^2 r_k^2 \\
&\stackrel{\tinym{Fact~\ref{lemma:good-p}}}{\leq}& \veps^2 r^2 \sum_{k \in G} u_k^2  + r^2 \sum_{k \notin G} u_k^2 \\
& \stackrel{(\ref{eqn:good-row})}{\leq} & \veps^2 r^2 + \frac{r^4}{\veps \sigma^2} \\&\stackrel{(\ref{eqn:assume})}{\leq}&  2 \veps^2 r^2.
\end{eqnarray*}

This completes the proof of the lemma.\qed
\end{proof}

Now, combining Lemma~\ref{lemma:term-2}, Lemma~\ref{lemma:term-3-1}, Lemma~\ref{lemma:term-3-2}, we see that~(\ref{eqn:4}) can be simplified as:
\[
\sum_{i} \E[\norm{\rv{i} - \sigma u_i \mathbf{z}}^2] \leq r^2 + 9 \veps r^2.
\]

Combining this with Lemma~\ref{lemma:first-coordinate}, we get that $\FN{A - X}$ has an expected value at most $(1+15\veps) r^2$ for the case where~(\ref{eqn:assume}) holds. 
Finally, combining this with Lemma~\ref{lemma:fkv-lemma}, we obtain the main result in  Theorem~\ref{thm:main}\footnote{The extra factor of $15$ can be handled by using $\veps/15$ instead of $\veps$ in the sampling procedure.}.

\section{Conclusion and Open Problems}
The following questions related to length-squared sampling are relevant for the current discussion.
\begin{enumerate}
\item Does a single length-squared sampled point approximate the best-fit line? How good is the approximation?

\item Does the result also hold for rank-$k$ approximation for $k > 1$? That is, does a set of $f(\veps)$ length-squared sampled rows (for some function $f$ of $\veps$) contain a rank-$k$ matrix $\tilde{A}$ such that $\FN{A - \tilde{A}} \leq (1 + \veps) \cdot \FN{A - \pi_k(A)}$?

\item Our results show that $\Omega(\frac{1}{\veps^4})$ length-squared sampled rows are sufficient to obtain $(1+\veps)$ multiplicative approximation. Can we show that sampling $\Omega(\frac{1}{\veps^4})$ rows are necessary? Note that for additive approximation (see Theorem~\ref{thm:frieze}), $\Omega(\frac{1}{\veps})$ rows are sufficient.

\item Does {\em adaptive} length-squared sampling\footnote{Adaptive sampling means that we sample a sequence of sets $S_1, ..., S_k$, where $S_1$ contains length-squared sampled rows of the given matrix $A$, $S_2$ contains length-squared sampled rows of the matrix $A - \pi_{S_1}(A)$, and so on. Here, $\pi_S(A)$ denotes the projection of the rows of matrix $A$ onto the linear subspace spanned by elements of $S$. See \cite{dv06} for a detailed discussion.}
give multiplicative rank-$k$ approximation?  Note that the previous work of Deshpande and Vempala~\cite{dv06} only showed additive approximation. Multiplicative approximation was shown to be achieved in subsequent work of Deshpande \etal~\cite{drvw06} when length-squared sampling was combined with {\em volume sampling}. So the relevant question is whether length-squared sampling is sufficient for multiplicative approximation.
\end{enumerate}
The first two questions are simple to resolve. 
A simple analysis, which is also implicit in some of the previous works (e.g., \cite{drvw06}), gives us that a single length-squared sampled point gives a 2-factor approximation in expectation.
A simple example gives a negative answer to the second question.
Consider a 2-dimensional point set where all but one point have coordinates $(x, 0)$ and the remaining point has coordinates $(0, y)$, where $x >> y$. In this case, $\FN{A - \pi_2(A)} = 0$ but any constant sized length-squared sampled set is unlikely to contain the point at $(0, y)$. 
The last two questions remain open.

\addcontentsline{toc}{section}{References}
\bibliographystyle{alpha}
\bibliography{paper}

\end{document}